\newcommand{\C}{\mathbb C}
\newcommand{\R}{\mathbb R}
\newcommand{\Z}{\mathbb Z}
\newcommand{\F}{\mathbb F}
\newcommand{\mcC}{\mathcal C}
\newcommand{\defeq}{\mathrel{\mathop :} =}
\newcommand{\eqdef}{= \mathrel{\mathop :}}
\newcommand{\tvec}[1]%
{ \big( \begin{smallmatrix} #1 \end{smallmatrix} \big) }
\newcommand{\svec}[1]%
{ {\small \begin{pmatrix} #1 \end{pmatrix}} }
\DeclareMathOperator{\re}{Re}
\DeclareMathOperator{\im}{Im}
\let\eps\varepsilon
\let\phi\varphi
\newtheorem{thm}{Theorem}
\newtheorem{prop}{Proposition}
\newtheorem{lem}{Lemma}
\newtheorem{fact}{Fact}
\theoremstyle{remark}
\newtheorem{rem}{Remark}
\newtheorem{exa}{Example}
\theoremstyle{definition}
\newtheorem{defn}{Definition}
\title{Codes Correcting Few Restricted Errors}
\author{\IEEEauthorblockN{Jens Zumbrägel}
  \IEEEauthorblockA{%
    \textit{Faculty of Computer Science and Mathematics}\\
    \textit{University of Passau}\\
    Innstraße 33, 94032 Passau, Germany}
    jens.zumbraegel@uni-passau.de}
\begin{document}

\maketitle

\begin{abstract}
  We consider linear codes over a field in which the error values are
  restricted to a subgroup of its unit group.  This scenario captures
  Lee distance codes as well as codes over the Gaussian or Eisenstein
  integers.  Codes correcting restricted errors gained increased
  attention recently in the context of code-based cryptography.
  
  In this work we provide new constructions of codes over the Gaussian
  or Eisenstein integers correcting two or three errors.  We adapt
  some techniques from Roth and Siegel's work on codes for the Lee
  metric.  We propose two construction methods, which may be seen of
  geometric and algebraic flavor, respectively.
\end{abstract}

\section*{Introduction}

Recently, the study of linear codes over a finite field~$\F_p$ with a
restricted error model was initiated~\cite{restricted}.  In this
scenario the error values are confined to a subset~$E$ of the base
field, typically a subgroup of its unit group~$\F_p^*$.  Since the
cases $E \defeq \F_p^*$ and $E \defeq \{ \pm 1 \}$ capture the Hamming
and the Lee error model, respectively, it is interesting to study the
characteristics of these codes for intermediate subgroups~$E$.  One
motivation for studying restricted errors comes from code-based
cryptography, with the proposal of efficient cryptosystems%
~\cite{freuden1, freuden2} and signatures~\cite{cross}.  While generic
decoding of restricted errors has been investigated~\cite{generic},
the construction of good codes and algebraic decoding methods appear
to be an intricate problem.

In this regard, codes over the Gaussian integers and the subgroup
$E \defeq \{ \pm 1, \pm i \}$ of order~$4$ gained some prominence in
the context of two-dimensional modulation schemes~\cite{gauss}, as did
codes over the Eisenstein ring with its order~$6$ subgroup~\cite{eisen}.
It was noted however that in both cases a straightforward construction
of constacyclic codes fails to correct two errors or more.  Indeed, it
seems difficult to adapt Berlekamp's decoding algorithm for negacyclic
codes in the Lee metric~\cite{berlekamp}, although some work in this
direction was done by Huber~\cite{huber}.

In the present contribution we adapt instead the Lee metric code
construction by Roth and Siegel~\cite{rothsiegel}.  By applying
techniques to circumvent the issue of error values beyond $\pm 1$, we
are able to construct codes correcting few errors in the case of a
subgroup~$E$ of order~$4$ or~$6$.  These codes have better parameters
for the restricted weight than maximum distance separable (MDS) codes
with the Hamming weight.  We also mention methods and obstacles when
extending the techniques to correct more errors.

The subsequent section covers preliminaries on the restricted weight
and the case of one-error correcting codes.  Then we turn to codes
over the Gaussian integers and propose code constructions for
correcting two or three errors.  Afterwards we deal with codes over
the Eisenstein ring and present two code constructions, one being of
geometrical flavor and the other being algebraic.

\section{Preliminaries}

Here we collect some basic results on the restricted weight, one-error
correcting codes and Newton identities.

\subsection{The restricted weight}

Let~$X$ be an abelian group.  By a \emph{weight} on the group~$X$ we
mean a function $w \colon X \to \R_{\ge 0}$ satisfying
\begin{enumerate}[\quad i)]
\item $w(x) = 0 ~\Leftrightarrow~ x = 0$,
\item $w(-x) = w(x)$,
\item $w(x + y) \le w(x) + w(y)$ \,(triangle inequality).
\end{enumerate}
Such a weight defines a metric $d \colon X \times X \to \R_{\ge 0}$ by
\[ d(x, y) \,\defeq\, w(x - y) \,. \]
As usual the weight and metric extend to~$X^n$ additively by $d(x, y)
\defeq \sum d(x_i, y_i) = \sum w(x_i - y_i)$ for $x, y \in X^n$.

\begin{exa} Two well-known distances in coding theory are the Hamming
  and Lee metrics.  The former is induced by the Hamming weight, which
  is given by $w(0) \defeq 0$ and $w(x) \defeq 1$ for $x \ne 0$.  For
  the latter we assume $X = \Z_q = \{ 0, \dots, q \!-\! 1 \}$,
  the integers modulo~$q$, and define for $x \in \Z_q$ the Lee weight
  by $w(x) \defeq \min \{ x, q \!-\! x \}$. \end{exa}

Now suppose that~$p$ is an odd prime and~$E$ is a subgroup
of~$\F_p^*$, where~$\F_p$ is the finite field with~$p$ elements.  We
let $m \defeq |E|$ be the order of~$E$, thus $m \mid p \!-\! 1$.  We
also assume $-1 \in E$, so that $E = -E$ and~$m$ is even.

\begin{defn} The \emph{restricted weight} on~$\F_p$ with respect to~$E$
  is for $x \in \F_p$ given by \[ w(x) \,\defeq\, \min \{ \ell \mid
    \eps_1 + \ldots + \eps_{\ell} = x ,\, \eps_i \in E \} \,, \]
  i.e.\ the shortest length of a path from~$0$ to~$x$ using errors
  in~$E$. \end{defn}

Note that the Hamming and Lee weights can be viewed as restricted
weights for $E \defeq \F_p^*$ and $E \defeq \{ \pm 1 \}$, respectively.

\begin{rem} A slightly different weight is defined in~\cite{restricted}
  by $w'(0) \defeq 0$, $w'(\eps) \defeq 1$ for $\eps \in E$,
  $w'(x) \defeq 2$ when $0 \ne x \in (E + E) \setminus E$, and
  $w'(x) \defeq \infty$ otherwise.  However in general, it does not
  fulfill the triangle inequality and induces no metric.  Our weight
  satisfies $w \le w'$, so the lower bounds on the minimum distance
  and the error-correction capability also hold for the
  weight~$w'$. \end{rem}

We are mainly interested in the cases $m = 4$ and $m = 6$,
corresponding to codes over Gaussian and Eisenstein integers,
respectively.

\subsection{One-error correcting codes}

Fix a weight~$w$ on an abelian group~$X$.  Consider an additive
code~$\mcC$ of length~$n$ over~$X$, i.e.\ a subgroup of~$X^n$, and
let \[ d \,\defeq\, \min \{ w(c) \mid 0 \ne c \in \mcC \} \] be its
minimum distance.  The code is \emph{$t$-error correcting} if
$d > 2 t$, which means that the balls $B_t(c)$ of radius~$t$ around
codewords $c \in \mcC$ are disjoint.  A $t$-error correcting code is
\emph{perfect} if these balls cover the whole space~$X^n$.

There always exist perfect one-error correcting codes for the
restricted weight.  Indeed, let~$E$ be a subgroup of~$\F_p^*$ of even
order~$m$.  Consider the equivalence classes on $\F_p^r \setminus
\{ 0 \}$, where $x \sim y$ if and only if $y = \eps x$ for some
$\eps \in E$, so that there are $n \defeq \frac 1 m (p^r \!-\! 1)$
classes.  Then define the linear code~$\mcC$ over~$\F_p$ by the
parity-check matrix
\[ H \,\defeq\, \begin{bmatrix}
    \,\mid & & \mid\, \\
    \,v_1 & \dots & v_n\, \\
    \,\mid & & \mid\, \end{bmatrix} , \]
where the $v_1, \dots, v_n \in \F_p^r \setminus \{ 0 \}$ form a set of
representatives of the classes.  A standard argument shows the following.

\begin{prop} The code~$\mcC$ is perfect one-error correcting. \end{prop}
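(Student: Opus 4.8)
The plan is to show that the balls $B_1(c)$, $c \in \mcC$, partition $\F_p^n$, using the standard syndrome argument for perfect codes. First I would observe that since $H$ has $r$ rows and $n$ columns with $r = \log_p(mn+1)$, the code $\mcC = \{ c \in \F_p^n \mid Hc = 0 \}$ has size $p^{n-r}$, so $|\mcC| \cdot |B_1(0)| = p^{n-r} \cdot (1 + mn)$ where the count $|B_1(0)| = 1 + mn$ comes from the fact that the vectors of restricted weight exactly~$1$ are precisely $\eps e_j$ with $\eps \in E$ and $1 \le j \le n$, all distinct, giving $mn$ of them, plus the zero vector. Since $mn = p^r - 1$, we get $|\mcC| \cdot |B_1(0)| = p^{n-r} \cdot p^r = p^n$, so it suffices to prove the balls are disjoint, i.e.\ that $d \ge 3$.

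Next I would verify $d \ge 3$ by a syndrome computation. A nonzero codeword $c$ of restricted weight $\le 2$ would satisfy one of: $Hc = \eps v_j$ for some $\eps \in E$, $j$ (weight~$1$), or $Hc = \eps v_j + \eps' v_k$ for $\eps, \eps' \in E$ and $j, k$ (weight~$2$); but $Hc = 0$. The weight-$1$ case forces $\eps v_j = 0$, impossible since $\eps \ne 0$ and $v_j \ne 0$. The weight-$2$ case gives $\eps v_j = -\eps' v_k = \eps'' v_k$ with $\eps'' \defeq -\eps' \in E$; if $j \ne k$ this is impossible since the $v_i$ are linearly independent in the relevant sense — more precisely, $v_j$ and $v_k$ lie in different $E$-equivalence classes, so $v_j \notin E v_k$, contradicting $v_j = (\eps^{-1}\eps'') v_k$; if $j = k$ it gives $(\eps + \eps') v_j = 0$, forcing $\eps = -\eps'$, so $c = \eps e_j - \eps e_j$ contributes the zero vector — wait, one must be careful here: weight~$2$ means the column combination is a true weight-$2$ error vector $x$ with $x_j = \eps$, and if the support is a single coordinate then $x_j \in E + E$ need not vanish, so I would instead argue directly that any nonzero $x \in \F_p^n$ with $w(x) \le 2$ has $Hx \ne 0$. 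If $x$ has one nonzero coordinate $x_j = a$ with $w(a) \le 2$, then $Hx = a v_j \ne 0$. If $x$ has two nonzero coordinates $x_j = \eps$, $x_k = \eps'$ with $\eps, \eps' \in E$ and $j \ne k$, then $Hx = \eps v_j + \eps' v_k = 0$ would give $v_j = -\eps^{-1}\eps' v_k \in E v_k$, contradicting that $v_j, v_k$ represent distinct classes.

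The main obstacle — really the only subtlety — is the bookkeeping around what ``restricted weight $\le 2$'' forces on the coordinate pattern, since a single coordinate can carry weight up to~$2$ (any element of $E + E$), whereas two distinct coordinates each carry weight exactly~$1$. Handling these two cases separately, as above, dispatches it cleanly. I would then conclude: the balls $B_1(c)$ are pairwise disjoint (from $d \ge 3$) and their union has size $|\mcC| \cdot |B_1(0)| = p^n = |\F_p^n|$, hence they cover $\F_p^n$, so $\mcC$ is perfect one-error correcting.
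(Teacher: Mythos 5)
Your proof is correct and is precisely the ``standard argument'' the paper invokes without writing it out: the sphere-counting identity $|\mcC| \cdot |B_1(0)| = p^{n-r}(1 + mn) = p^n$ combined with the case analysis showing $d \ge 3$, where your careful separation of the one-coordinate case (value of weight up to~$2$) from the two-coordinate case (values in~$E$, distinct classes) is exactly the right bookkeeping. The only point left implicit is that $H$ has full rank~$r$ (needed for $|\mcC| = p^{n-r}$), which holds because the columns $v_j$ together with their $E$-multiples exhaust $\F_p^r \setminus \{0\}$ and hence span $\F_p^r$.
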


\begin{exa} Let $p = 13$ and $m = 4$ so that $E = \{ \pm 1, \pm 5 \}$.
  We get for $m = 1$ the perfect one-error code defined by \[ H
    \,\defeq\, \begin{bmatrix} \,1 & 2 & 4\, \end{bmatrix} . \] \end{exa}

Golomb and Welch~\cite{golombwelch} suggest that for Lee metric these
are essentially the only perfect codes.  It would be intriguing to
find other perfect codes for restricted weights, cf.~\cite{mbg}.

\subsection{Newton identities}

An important tool for establishing bounds on the minimum distance and
decoding algorithms are the Newton identities.

\begin{fact}\label{fact:newton} Let~$F$ be a field.  The polynomial
  $\sigma \defeq \prod_j (1 - \alpha_j z)$, where $\alpha_1, \dots,
  \alpha_t \in F$, is related to the power series $S \defeq \sum_{i \ge 1}
  S_i z^i$ with coefficients $S_i \defeq \sum_j \alpha_j^i$ as
  \[ S \sigma + z \sigma' = 0 \,. \] \end{fact}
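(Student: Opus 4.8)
The plan is to prove this as an identity in the formal power series ring $F[[z]]$, using the logarithmic derivative of $\sigma$. Since $\sigma(0) = 1$, the polynomial $\sigma$ is a unit in $F[[z]]$, so it suffices to establish the equivalent identity $z\sigma'/\sigma = -S$; multiplying back by $\sigma$ then gives $z\sigma' = -S\sigma$, which is the assertion. This is just the classical Newton identity relating power sums $S_i$ and elementary symmetric functions (the coefficients of $\sigma$), packaged in generating-function form.

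First I would compute the logarithmic derivative termwise. Writing $\sigma = \prod_j (1 - \alpha_j z)$ and applying the product rule, one gets $\sigma' = \sum_j (-\alpha_j)\prod_{k \ne j}(1 - \alpha_k z)$, and hence
\[ \frac{\sigma'}{\sigma} \,=\, \sum_{j=1}^t \frac{-\alpha_j}{1 - \alpha_j z} \,. \]
I would stress that this step uses only the product rule and division in $F[[z]]$; no actual logarithm is needed, so it is valid over a field of any characteristic.

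Next I would expand each summand as a geometric series in $F[[z]]$, using $\frac{1}{1-\alpha_j z} = \sum_{i \ge 0}\alpha_j^i z^i$, so that $\frac{-\alpha_j z}{1-\alpha_j z} = -\sum_{i \ge 1}\alpha_j^i z^i$. Summing over the finitely many $j$ and interchanging with the formal sum over $i$ yields
\[ \frac{z\sigma'}{\sigma} \,=\, -\sum_{i \ge 1}\Big(\sum_{j=1}^t \alpha_j^i\Big) z^i \,=\, -\sum_{i \ge 1} S_i z^i \,=\, -S \,, \]
and multiplying through by $\sigma$ completes the argument.

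There is no real obstacle here; the only points requiring care are that the manipulations take place in $F[[z]]$ rather than $F[z]$ (so that $1/(1-\alpha_j z)$ and $1/\sigma$ are meaningful), that $\sigma$ is invertible there precisely because its constant term equals $1$, and that the derivation never divides by an integer, so the identity is characteristic-free. An alternative route would be induction on $t$, passing from $\sigma$ to $(1 - \alpha_{t+1}z)\sigma$ and updating $S$; I would avoid this since it involves more bookkeeping than the logarithmic-derivative computation.
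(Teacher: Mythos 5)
Your proposal is correct and follows essentially the same route as the paper: both compute $\sigma'$ via the product rule as $\sum_j -\alpha_j\,\sigma/(1-\alpha_j z)$ and expand each $\alpha_j z/(1-\alpha_j z)$ as a geometric series in $F[[z]]$. The only cosmetic difference is that you divide by $\sigma$ and multiply back at the end, whereas the paper keeps the factor $\sigma$ throughout (so that $\sigma/(1-\alpha_j z)$ is even a polynomial); both arguments are characteristic-free as you note.
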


\begin{proof} Using the Leibniz rule $\sigma' = \sum_j - \alpha_j
  \frac \sigma {1 - \alpha_j z}$ and the power series $\frac {\alpha_j z}
  {1 - \alpha_j z} = \sum_{i \ge 1} (\alpha_j z)^i$ we may compute
  $-z \sigma' = \sum_j \alpha_j z \frac \sigma {1 - \alpha_j z} =
  \sigma \sum_j \sum_{i \ge 1} (\alpha_j z)^i = \sigma S$. \end{proof}

For correction of~$t$ errors at positions $\alpha_1, \dots, \alpha_t$
the polynomial~$\sigma$ and the coefficients~$S_i$ correspond to the
error locators and the syndromes, respectively.  Writing $\sigma =
1 + \sum_{j=1}^t a_j z^j$ we have $\sigma' = \sum_j j a_j z^{j-1}$ and
can thus reconstruct the coefficients by the Newton identities as
\begin{align*} - a_1 &= S_1 \\
  -2 a_2 &= S_2 + a_1 S_1 \\
  -3 a_3 &= S_3 + a_1 S_2 + a_2 S_1 \\ &\cdots \end{align*}
Hence, in particular if $t < p$ the characteristic of~$F$, then given
syndromes $S_1, \dots, S_t$ determine uniquely the coefficients
$a_1, \dots, a_t$ of the error locator polynomial~$\sigma$.

\section{Codes over the Gaussian integers}

We present a construction of linear codes having minimum distance
$\ge 6$, therefore correcting two restricted errors.  Instead of
adapting Berlekamp's negacyclic codes it is based on the Lee-error
correcting BCH codes devised by Roth and Siegel.  Afterwards we
modify our construction in order to correct three errors.  We start by
collecting some facts on codes over the Gaussian integers and the
Mannheim distance.

\begin{fact} Every prime $p \equiv 1 \pmod 4$ can be written as
  $p = a^2 + b^2$ with positive integers $a, b$, unique up to
  order. \end{fact}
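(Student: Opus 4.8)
The plan is to carry this out inside the ring of Gaussian integers $\Z[i]$, equipped with the multiplicative norm $N(a+bi) \defeq a^2 + b^2$; recall that $\Z[i]$ is a Euclidean domain with respect to $N$, hence a unique factorization domain. Existence of the representation will follow from showing that $p$ fails to be irreducible in $\Z[i]$, and uniqueness will be read off from unique factorization after bookkeeping the four units $\pm 1, \pm i$.

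First I would exhibit a square root of $-1$ modulo~$p$. Since $p \equiv 1 \pmod 4$, the cyclic group $\F_p^*$ has order $p - 1$ divisible by~$4$, hence contains an element $x$ of order~$4$, and then $x^2 \equiv -1 \pmod p$. (Alternatively one may take $x \defeq \bigl( \tfrac{p-1}{2} \bigr)!$ and invoke Wilson's theorem.) Consequently $p$ divides $x^2 + 1 = (x+i)(x-i)$ in $\Z[i]$, yet $p$ divides neither factor, as $\tfrac xp \pm \tfrac 1p i \notin \Z[i]$. Thus $p$ is not a prime element of $\Z[i]$, and since $\Z[i]$ is a UFD it is therefore not irreducible: we may write $p = \alpha\beta$ with $\alpha, \beta$ non-units. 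Taking norms gives $p^2 = N(\alpha)\,N(\beta)$, and as a non-unit has norm $> 1$ this forces $N(\alpha) = N(\beta) = p$. Writing $\alpha = a+bi$ we obtain $a^2 + b^2 = p$; neither $a$ nor $b$ can vanish since $p$, being prime, is not a perfect square, so replacing $\alpha$ by an associate we may take $a, b > 0$.

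For uniqueness, suppose $p = a^2+b^2 = c^2+d^2$ with positive integers, and put $\alpha \defeq a+bi$, $\gamma \defeq c+di$, both of norm $p$. An element of prime norm admits no nontrivial factorization, so $\alpha, \bar\alpha, \gamma, \bar\gamma$ are all irreducible, and $p = \alpha\bar\alpha = \gamma\bar\gamma$ are two factorizations of $p$ into irreducibles. By unique factorization $\gamma$ is an associate of $\alpha$ or of $\bar\alpha$; the associates of $\alpha$ are $\pm(a+bi)$ and $\pm(b-ai)$, and those of $\bar\alpha$ are obtained by conjugation, so in every case $\{c, d\} = \{\,|a|, |b|\,\} = \{a, b\}$. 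Hence the representation is unique up to order.

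The mathematics here is entirely classical, so the only real decision is how much to take for granted: the one nontrivial input is unique factorization in $\Z[i]$, which I would either recall via the Euclidean algorithm (division with remainder controlled by $N$) or simply cite. Everything else --- the existence of $\sqrt{-1} \bmod p$, the norm computation, and the enumeration of associates in the uniqueness step --- is routine. Should a $\Z[i]$-free argument be preferred for existence, Thue's lemma works: a pigeonhole count over the pairs $(u,v)$ with $0 \le u, v \le \lfloor \sqrt p \rfloor$, applied to the values $u - xv \bmod p$, yields nonzero $a, b$ with $|a|, |b| < \sqrt p$ and $p \mid a^2 + b^2$, whence $0 < a^2 + b^2 < 2p$ gives $a^2+b^2 = p$. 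Given that the surrounding development already takes place over the Gaussian integers, though, the $\Z[i]$ route is the natural choice.
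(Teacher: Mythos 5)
Your argument is correct. Note, however, that the paper itself offers no proof of this Fact: it is Fermat's classical two-squares theorem, stated as known background before introducing $\pi \defeq a + ib$, so there is nothing to compare against line by line. Your route through $\Z[i]$ --- producing $x$ with $x^2 \equiv -1 \pmod p$ from the cyclic group $\F_p^*$, concluding that $p$ splits as $\alpha\bar\alpha$ with $N(\alpha) = p$, and reading off uniqueness from unique factorization together with the enumeration of the four associates --- is the standard proof and is the one most consonant with the paper's setting, which immediately goes on to work in $\Z[i]/(\pi)$. All the steps check: the norm computation forces $N(\alpha) = N(\beta) = p$, an element of prime norm is irreducible, and the associates of $a + bi$ and of its conjugate all have coordinate set $\{\,|a|, |b|\,\}$, giving uniqueness up to order. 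The only cosmetic quibble is in the Thue's-lemma aside, where the pigeonhole initially yields only $(a,b) \ne (0,0)$; that both entries are nonzero then follows from $|a|, |b| < \sqrt p$ and $a \equiv xb \pmod p$, as you implicitly use.
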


Note that exactly one of~$a$ and~$b$ has to be even and the other is
odd.  Let $\pi \defeq a + i b \in \Z[i]$.

We consider the isomorphism $\F_p \cong \Z[i] / (\pi)$.  The ideal
$(\pi)$ can be viewed as a lattice~$L$ in $\Z[i]$ spanned by the
integral basis $(\pi, i \pi)$.  Seen as a lattice in $\Z^2$ the basis
vectors are $\tvec{a \\ b}$ and $\tvec{-b \\ a}$, which are orthogonal.
On $\Z[i]$ we have the Manhattan distance defined by the weight
\[ W(x + i y) \,\defeq\, |x| + |y| \,. \]
It induces the \emph{Mannheim distance}~\cite{gauss} on $\Z[i] /
(\pi)$ as a quotient distance modulo the lattice $L = (\pi)$, hence
\[ w([x + i y]) \,\defeq\, \min \{ W(x' + i y') \mid x' + i y'
  \in [x + i y] \} \,. \]
This defines a weight, where $w([x + i y])$ can be interpreted as the
length of the shortest path from~$0$ to a representative $x' + i y'$
using the steps $+1, +i, -1, -i$.

Let $R \defeq \{ x + i y \mid w([x + i y]) = W(x + i y) \}$ denote
the “fundamental region” of the lattice~$L$, which also equals
$R = ([-\frac 1 2, \frac 1 2] \pi + [-\frac 1 2, \frac 1 2] i \pi)
\cap \Z[i]$.

\begin{exa} Consider $p = 13 = 3^2 + 2^2$ so that $\pi = 3 + 2 i$.
  The field $\F_p = \{ -6, \dots, 6 \}$ can be represented as
  $\Z[i] / (\pi)$ with fundamental region and the Mannheim distance
  as follows.
  \begin{center}\begin{tikzpicture}[scale=0.6]
      \node at (0, 0) {$0$};
      \node at (1, 0) {$1$}; \node at (-1, 0) {$-1$};
      \node at (2, 0) {$2$}; \node at (-2, 0) {$-2$};
      \node at (0, -2) {$3$}; \node at (0, 2) {$-3$};
      \node at (-1, 1) {$4$}; \node at (1, -1) {$-4$};
      \node at (0, 1) {$5$}; \node at (0, -1) {$-5$};
      \node at (1, 1) {$6$}; \node at (-1, -1) {$-6$};
    \end{tikzpicture}\end{center} \end{exa}

\begin{fact} We have $\min \{ W(x + i y) \mid 0 \ne x + i y \in
  (\pi) \} = a + b$. \end{fact}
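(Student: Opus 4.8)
The plan is to show that among all nonzero lattice vectors of $L = (\pi)$, the shortest in the Manhattan weight $W$ is $\pi$ itself (or one of its associates $i\pi = -b + ia$, $-\pi$, $-i\pi$), each having weight $W(\pi) = a + b$, and that nothing shorter can occur. Write an arbitrary element of $L$ as $\lambda = (s + it)\pi$ with $s, t \in \Z$, so that in $\Z^2$-coordinates $\lambda = s\,\tvec{a \\ b} + t\,\tvec{-b \\ a} = \tvec{sa - tb \\ sb + ta}$, and we must bound $W(\lambda) = |sa - tb| + |sb + ta|$ from below by $a+b$ whenever $(s,t) \ne (0,0)$.

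First I would dispose of the degenerate cases: if $t = 0$ then $\lambda = s\pi$ with $s \ne 0$, so $W(\lambda) = |s|(a+b) \ge a+b$, and symmetrically for $s = 0$ using that the basis vector $i\pi$ also has weight $a + b$ since $|-b| + |a| = a + b$. For the main case $s, t$ both nonzero, the key step is the elementary inequality $|u| + |v| \ge \max\{|u+v|,\ |u-v|\}$ applied cleverly: with $u = sa - tb$ and $v = sb + ta$ we get $u + v = (s+t)a + (s-t)b$ and $u - v = (s - t)a - (s+t)b$, which does not immediately simplify, so instead I would argue via the Euclidean norm. Since the basis $(\pi, i\pi)$ is orthogonal with both vectors of Euclidean length $\sqrt{a^2+b^2} = \sqrt p$, any nonzero $\lambda \in L$ satisfies $\|\lambda\|_2 \ge \sqrt p$, and then $W(\lambda) = \|\lambda\|_1 \ge \|\lambda\|_2 \ge \sqrt p$. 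This gives $W(\lambda) \ge \sqrt p$, but we need the sharper bound $W(\lambda) \ge a + b$, and $a + b > \sqrt{a^2+b^2}$, so the Euclidean estimate alone is not enough — that is the main obstacle.

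To close this gap I would instead use the characterization of $L$ modulo $\Z[i]$-arithmetic directly: every $\lambda \in L \setminus \{0\}$ is a nonzero multiple of $\pi$ in $\Z[i]$, so $\lambda = \pi\mu$ with $\mu = s + it \ne 0$; taking complex absolute values, $|\lambda|^2 = |\pi|^2|\mu|^2 = p(s^2 + t^2)$. If $s^2 + t^2 \ge 2$ then $|\lambda| \ge \sqrt{2p}$, and since for any Gaussian integer $W(z)^2 \ge \tfrac12\bigl(|x|+|y|\bigr)^2 \ge x^2 + y^2 = |z|^2$ would only give $W(\lambda) \ge \sqrt{2p}$; but one checks $\sqrt{2p} \ge a+b$ because $2p = 2a^2 + 2b^2 \ge (a+b)^2$. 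Hence the only remaining case is $s^2 + t^2 = 1$, i.e. $\mu \in \{\pm 1, \pm i\}$, which gives exactly the four associates of $\pi$, each of weight $W(\pm\pi) = W(\pm i\pi) = a + b$. Combining the two cases yields $\min\{W(\lambda) \mid 0 \ne \lambda \in (\pi)\} = a + b$, with the minimum attained, as claimed.
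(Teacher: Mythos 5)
Your proof is correct and rests on the same fact as the paper's one-line argument---the orthogonality of the basis $(\pi, i\pi)$, exploited via $|\pi\mu| = \sqrt{p}\,|\mu|$---but you supply the case split ($\mu$ a unit, giving the four associates of weight exactly $a+b$, versus $|\mu|^2 \ge 2$, where $W(\lambda) \ge \sqrt{2p} \ge a+b$ since $2(a^2+b^2) \ge (a+b)^2$) that the paper leaves implicit, and you correctly identify that the bare Euclidean bound $\sqrt p$ alone would not suffice. One slip to fix: the displayed chain $W(z)^2 \ge \tfrac12(|x|+|y|)^2 \ge x^2+y^2$ is wrong in the middle (that inequality goes the other way); what you need, and what is true, is simply $W(z)^2 = (|x|+|y|)^2 \ge x^2+y^2 = |z|^2$.
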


\begin{proof} This follows since the basis $(\pi, i \pi)$ of the
  lattice~$L$ is orthogonal with $W(\pi) = W(i \pi) = a + b$. \end{proof}

\subsection{Codes correcting two errors}

The code constructions and its distance property will depend on the
ability to perform “real” and “imaginary” parts in~$\F_p$ when
represented by Gaussian integers.  For this we choose some set
$A \subseteq \Z$ such that $A + i A \subseteq R$, the fundamental
region of the lattice $L = (\pi)$.

This allows to define real and imaginary parts of elements in
$\F_p \cong \Z[i] / (\pi)$ in an unambiguous way if $\alpha \in
A + i A$ by
\begin{gather*}
  \re \colon A + i A \to A \,, \quad x + i y \mapsto x \,, \\
  \im \colon A + i A \to A \,, \quad x + i y \mapsto y \,.
\end{gather*}
For $\alpha \in A + i A$ there always holds that $\alpha = \re \alpha
+ i \im \alpha$.  Note however that rules like $\re(\alpha + \beta)
= \re \alpha + \re \beta$ and $\im(\alpha + \beta) = \im \alpha +
\im \beta$ do not hold without further restriction of the domain
$A + i A$.

Let $n \defeq |A|^2$ and $J \defeq A + i A \subseteq R$ so that $|J| = n$.
Denote its elements as $J \eqdef \{ \alpha_1, \dots, \alpha_n \}$.  We
consider the linear code~$\mcC$ over~$\F_p$ of length~$n$ defined by
the parity-check matrix
\[ H \,\defeq\, \begin{bmatrix} 1 & \dots & 1 \\
    \re \alpha_1 & \dots & \re \alpha_n \\
    \im \alpha_1 & \dots & \im \alpha_n \\
    \alpha_1^2 & \dots & \alpha_n^2 \end{bmatrix} . \]

\begin{exa} In the case $p = 13$ we can use $A = \{ -1, 0, 1 \}$,
  so that $n = 9$ and $J = \{ -6, -5, -4, -1, 0, 1, 4, 5, 6 \}$.  The
  parity-check matrix is given by
  \[ H \,\defeq\, \begin{bmatrix}
      \,1 & 1 & 1 & 1 & 1 & 1 & 1 & 1 & 1\, \\
      -1\! & 0 & 1 & \!-1\! & 0 & 1 & \!-1\! & 0 & 1\, \\
      -1\! & \!-1\! & \!-1\! & 0 & 0 & 0 & 1 & 1 & 1\, \\
      -3\! & \!-1\! & 3 & 1 & 0 & 1 & 3 & \!-1\! & \!-3 \end{bmatrix} . \]
  We can infer from the argument below that the minimum distance is
  $\ge 5$ (since $a + b = 5$), thus correcting two errors. \end{exa}

Let us assume that the prime~$p$ satisfies $a + b \ge 7$, which occurs
if $p \ge 29$.

\begin{thm}\label{thm:first} The above code~$\mcC$ has minimum
  Mannheim distance $\ge 6$. \end{thm}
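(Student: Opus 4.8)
The plan is to suppose, for a contradiction, that $\mcC$ contains a codeword $c \ne 0$ with $w(c) \le 5$, and to force $c = 0$. For each coordinate fix a representative $\hat c_j \in \Z[i]$ of $c_j$ with $W(\hat c_j) = w(c_j)$, so that $\sum_j W(\hat c_j) = w(c) \le 5$; this presents $c$ as a collection of $r \defeq w(c)$ unit errors from $\{\pm 1, \pm i\}$ placed at positions of $J$. The parity checks give, in $\F_p$, the relations $\sum_j c_j = 0$, $\sum_j \re(\alpha_j) c_j = 0$, $\sum_j \im(\alpha_j) c_j = 0$ and $\sum_j \alpha_j^2 c_j = 0$; since $\alpha_j = \re \alpha_j + i\, \im \alpha_j$ on $A + iA$, the two middle ones also yield $\sum_j \alpha_j c_j = 0$. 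From the first relation $\sum_j \hat c_j \equiv 0 \pmod \pi$, while $W(\sum_j \hat c_j) \le w(c) \le 5 < a + b$; as $a + b$ is the shortest Manhattan length of a nonzero vector of $(\pi)$, this forces $\sum_j \hat c_j = 0$ in $\Z[i]$. Hence the $r$ unit errors sum to zero, so $r$ is even and $w(c) \in \{2, 4\}$.

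If $w(c) = 2$ the unit errors are $e$ and $-e$ at positions $\beta, \beta' \in J$; they cannot lie on one position (they would cancel, giving $c = 0$), and since $R$ is a transversal $\beta \not\equiv \beta' \pmod \pi$, contradicting $\sum_j \alpha_j c_j = e(\beta - \beta') \equiv 0 \pmod \pi$ with $e$ a unit. If $w(c) = 4$, let $\gamma_1, \dots, \gamma_s \in J$ be the distinct error positions carrying nonzero net values $\eta_1, \dots, \eta_s \in \F_p$, so $\sum_\ell \eta_\ell = 0$, $\sum_\ell \gamma_\ell \eta_\ell = 0$ and $\sum_\ell \gamma_\ell^2 \eta_\ell = 0$, and the $\gamma_\ell$ are distinct in $\F_p$ because $R$ is a transversal. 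For $s \le 3$ the equations $\sum_\ell \gamma_\ell^k \eta_\ell = 0$ ($k = 0, 1, 2$) contain an invertible Vandermonde subsystem, forcing every $\eta_\ell = 0$ — impossible. So $s = 4$; then each $\eta_\ell$ has Mannheim weight $1$, i.e.\ $\eta_\ell \in \{\pm 1, \pm i\}$, and $\sum_\ell \eta_\ell = 0$ leaves only the patterns $\{1,1,-1,-1\}$, $\{i,i,-i,-i\}$ and $\{1,-1,i,-i\}$ for the multiset of values.

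The case $s = 4$ is the core, and I would settle it using the lemma that $(\pi)$ has no nonzero vector $u + iv$ with $|u|, |v| \le \mathrm{diam}(A)$. Indeed $A + iA \subseteq R$ gives $|s|(a+b) \le p/2$ for $s \in A$, hence $\mathrm{diam}(A) \le p/(a+b)$; and writing $u + iv = m\pi + n\,i\pi$, the case $m^2 + n^2 = 1$ is excluded because $\{\pm\pi, \pm i\pi\}$ has a coordinate of size $\max(a,b) > p/(a+b)$, the case $m^2 + n^2 = 2$ because $W(\pm \pi \pm i\pi) = 2\max(a,b) > 2p/(a+b)$, and $m^2 + n^2 \ge 4$ because the Euclidean length is $\sqrt{(m^2+n^2)p} \ge 2\sqrt p > 2p/(a+b)$ (using $a + b > \sqrt p$). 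For the pattern $\{1,-1,i,-i\}$ the relations coming from the rows $\re\alpha_j$ and $\im\alpha_j$ become two Gaussian integers with coordinates bounded by $\mathrm{diam}(A)$ that lie in $(\pi)$, hence vanish by the lemma, giving $\gamma_1 = \gamma_2$ and contradicting distinctness. For the patterns $\{1,1,-1,-1\}$ and $\{i,i,-i,-i\}$ those same relations are integer congruences modulo $p$ of absolute value $< p$, hence genuine equalities giving $\gamma_1 + \gamma_2 = \gamma_3 + \gamma_4$ in $\Z[i]$; the row $\alpha_j^2$ then gives $\gamma_1\gamma_2 \equiv \gamma_3\gamma_4 \pmod \pi$, so $\gamma_1, \gamma_2$ and $\gamma_3, \gamma_4$ are the roots of one monic quadratic over $\F_p$, forcing $\{\gamma_1, \gamma_2\} = \{\gamma_3, \gamma_4\}$ in $J$ and contradicting that the four positions are distinct. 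I expect the one genuine difficulty to be this $s = 4$ step: isolating the lattice lemma above, and noticing that the quadratic row $\alpha_j^2$ is precisely what kills the one-axis error patterns which the linear rows alone leave open.
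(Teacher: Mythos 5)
Your proposal is correct and follows essentially the same route as the paper: lift the first syndrome to an exact equation in $\Z[i]$ using $a+b\ge 7$, reduce to the three zero-sum patterns $\{1,1,-1,-1\}$, $\{i,i,-i,-i\}$, $\{1,-1,i,-i\}$, kill the mixed pattern with the $\re/\im$ rows and the uniqueness of representation in $R$, and the pure patterns with $\gamma_1+\gamma_2=\gamma_3+\gamma_4$ plus the quadratic row. You merely make explicit two points the paper leaves implicit, namely the Vandermonde step ruling out errors piling up at $\le 3$ positions and the lattice lemma (via $s+is\in R$ giving $|s|(a+b)\le p/2$) that justifies injectivity of $A+iA\to\F_p$ and of the $\re/\im$ maps.
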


\begin{proof} (Inspired from Roth-Siegel~\cite{rothsiegel}.)  Suppose~$c$
  to be a codeword of Mannheim distance $< 6$, then we show that
  necessarily $c = 0$.  Denoting the rows of~$H$ by $h^1, \dots, h^4$
  we have $h^i \, c^{\top} = 0$ for $i = 1, \dots, 4$.

  The equation $h^1 \, c^{\top} = 0$ means that the values of the
  restricted errors sum up to~$0$.  (As $a + b \ge 7$ we cannot reach
  another lattice point.)  This allows to write $c = c_+ - c_-$ with
  $c_+, c_-$ having errors $+1$, $+i$ only, where~$c_+$ contains the
  errors $+1$, $+i$ and~$c_-$ accounts for the errors~$-1$, $-i$.  The
  number of errors in~$c_+$ and~$c_-$ is the same, thus the total
  weight of~$c$ is even and $\le 4$.  Now we consider two cases.
  \begin{enumerate}[\quad i)]
  \item Only~$+1$ and~$-1$ errors occur.  Then~$c_+$ and~$c_-$ contain
    at most two errors each.  Suppose the error positions are
    $\alpha, \beta$ and $\alpha', \beta'$, respectively.  Combining
    the second and third equations we get $\alpha + \beta = \alpha'
    + \beta'$.  Together with $\alpha^2 + \beta^2 = \alpha'^2 + \beta'^2$
    we can deduce $\alpha \beta = \alpha' \beta'$ and thus obtain the
    set $\{ \alpha, \beta \} = \{ \alpha', \beta' \}$.  The same
    argument applies if there are only~$+i$ and~$-i$ errors.
  \item Each of $c_+, c_-$ has one~$+1$ error and one~$+i$ error.
    Suppose that~$\alpha, \beta$ are their respective positions
    in~$c_+$, and similarly $\alpha', \beta'$ in~$c_-$.  Then we infer
    from the second row
    \[ \re \alpha + i \re \beta \,=\, \re \alpha' + i \re \beta' \]
    from which we obtain $\re \alpha = \re \alpha'$ and $\re \beta
    = \re \beta'$.  Likewise we obtain from the third row $\im \alpha
    = \im \alpha'$, $\im \beta = \im \beta'$, from which we get
    $\alpha = \alpha'$, $\beta = \beta'$.
  \end{enumerate}
  Hence in both cases we have that the error positions in~$c_+$ and~$c_-$
  are the same, which cannot occur unless $c = 0$.
\end{proof}

\begin{exa} For $p = 29$ we can take $A = \{ -2, -1, 0, 1, 2 \}$ and
  have $n = 25$ to obtain a code with minimum Mannheim distance
  $\ge 6$. \end{exa}

\begin{rem} A decoding method for correcting up to two restricted
  errors can be sketched as follows.  Suppose that~$x$ is a received
  vector containing errors at positions~$\alpha$ and~$\beta$ with
  values in $\{ \pm 1, \pm i \}$.  The first syndrome $s \defeq h^1
  x^{\top}$ then provides the sum of the error values.
  \begin{enumerate}[\quad i)]
  \item If $s = \pm 2$ or $s = \pm 2 i$, we get from the other
    syndromes $\alpha + \beta$, $\alpha^2 + \beta^2$, so we
    deduce $\alpha \beta$ and thus $\{ \alpha, \beta \}$.
  \item If $s = 0$, we have $\alpha - \beta$, $\alpha^2 - \beta^2$
    or $i \alpha - i \beta$, $i \alpha^2 - i \beta^2$, from which
    we compute $\alpha + \beta$.  This way we obtain~$\alpha$
    (from $\alpha - \beta$ or $i \alpha - i \beta$) and hence~$\beta$.
  \item Finally, if $s = 1 + i$, say, we have $\re \alpha + i \re \beta$
    and $\im \alpha + i \im \beta$, from which we directly get~$\alpha$
    and~$\beta$.
  \end{enumerate}\end{rem}

\subsection{Codes correcting three errors}

We can modify the above argument in order to construct codes of
minimum distance $\ge 8$, so being able to correct three restricted
errors.  For this we require a set $A \subseteq \Z$ satisfying
$-A = A$ and $A + A + i A \subseteq R$.  We also assume the prime~$p$
to be such that $a + b \ge 9$, which occurs if $p \ge 41$.

As before we let $n \defeq |A|^2$ and $J = \{ \alpha_1, \dots,
\alpha_n \} \defeq A + i A$.  We consider the code~$\mcC$ over~$\F_p$
given by the parity-check matrix
\[ H \,\defeq\, \begin{bmatrix} 1 & \dots & 1 \\
    \re \alpha_1 & \dots & \re \alpha_n \\
    \im \alpha_1 & \dots & \im \alpha_n \\
    \alpha_1^2 & \dots & \alpha_n^2 \\
    \alpha_1^3 & \dots & \alpha_n^3 \end{bmatrix} . \]

\begin{thm} The above code~$\mcC$ has minimum Mannheim distance
  $\ge 8$. \end{thm}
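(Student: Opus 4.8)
The plan is to imitate the structure of the proof of Theorem~\ref{thm:first} with one extra syndrome row, so that $c_\pm$ may now carry up to three errors each. Suppose $c$ is a codeword of Mannheim distance $< 8$. The first syndrome equation $h^1 c^\top = 0$ again says the error values sum to zero in $\F_p$; since $a+b \ge 9$, this sum vanishes already in $\Z[i]$ (no nonzero lattice point of $L$ has Manhattan weight below $9$, and all the error values have Manhattan weight $1$ summing to at most $7$). Hence we may write $c = c_+ - c_-$, where $c_+$ collects the $+1$ and $+i$ errors and $c_-$ the $-1$ and $-i$ errors, with equally many errors in each, so the total weight of $c$ is even and $\le 6$, i.e.\ each of $c_+, c_-$ has at most three errors. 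The goal, as before, is to show the multisets of error positions (together with their types) of $c_+$ and $c_-$ coincide, forcing $c = 0$.

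Next I would split into cases according to how the errors in $c_+$ (equivalently $c_-$) distribute between type $+1$ and type $+i$. Write $\alpha_1,\dots,\alpha_r$ for the $+1$-positions and $\beta_1,\dots,\beta_s$ for the $+i$-positions in $c_+$, with $r+s \le 3$, and $\alpha'_1,\dots,\alpha'_r$, $\beta'_1,\dots,\beta'_s$ the corresponding positions in $c_-$ (the counts must match type-by-type, which one checks by comparing real and imaginary parts of the syndromes, as in case~ii of Theorem~\ref{thm:first}). The second row gives $\sum \re\alpha_j + i\sum\re\beta_k = \sum\re\alpha'_j + i\sum\re\beta'_k$, and here the hypothesis $A + A + iA \subseteq R$ is what guarantees these partial sums of real parts (each of magnitude at most the sum of two elements of $A$, since $r,s\le$ a value controlled by $r+s\le3$) stay inside the fundamental region, so that equality holds coordinatewise: $\sum\re\alpha_j = \sum\re\alpha'_j$ and $\sum\re\beta_k = \sum\re\beta'_k$, and likewise for imaginary parts from the third row. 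Thus $\sum\alpha_j = \sum\alpha'_j$ and $\sum\beta_k = \sum\beta'_k$ in $\F_p$. Combining with rows~four and five we get $\sum\alpha_j^2 = \sum\alpha'^2_j$ and $\sum\alpha_j^3 = \sum\alpha'^3_j$, and similarly for the $\beta$'s, because $h^4 c^\top = 0$ reads $\sum\alpha_j^2 + i\sum\beta_k^2 = \sum\alpha'^2_j + i\sum\beta'^2_k$ where the factor $i$ lets us separate the two sums (and analogously at third power). Now for each of the two groups of at most three elements we know the first three power sums agree; by the Newton identities (Fact~\ref{fact:newton}), valid since $3 < p$, the elementary symmetric functions agree, hence the multisets $\{\alpha_j\} = \{\alpha'_j\}$ and $\{\beta_k\} = \{\beta'_k\}$. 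Therefore the error patterns of $c_+$ and $c_-$ are identical and $c = 0$.

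The main obstacle I expect is the bookkeeping in the case analysis that isolates the $+1$-errors from the $+i$-errors and matches their counts: the syndromes mix real and imaginary contributions, so I must be careful that the only way $h^4 c^\top = 0$ decouples into a real and an imaginary equation is that the relevant partial power sums lie in a small enough range — this is exactly where $A + A + iA \subseteq R$ and $a+b\ge 9$ are used, and one must verify the weight budget ($\le 6$ total, so $\le 3$ per side, so each group of $\alpha$'s or $\beta$'s has size $\le 3$ and each partial sum of real parts is a sum of at most… well, at most three elements of $A$, which still lies in $R$ only because $A+A+iA\subseteq R$ bounds sums of the form $\sum\re\alpha_j$ when combined with $-A=A$ to pair up terms). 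I would state the needed containment as a small lemma up front to keep the main argument clean. The remaining steps — deriving $\alpha\beta$-type relations and invoking Newton's identities — are routine once the decoupling is in place.
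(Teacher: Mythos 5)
Your overall strategy (decompose $c = c_+ - c_-$, decouple the syndrome equations into real and imaginary components, then apply Newton's identities) matches the paper's, and your treatment of rows two and three is essentially right: both sides of $h^2 c^{\top} = 0$ are represented by elements of $A + A + iA \subseteq R$, and since $R$ is a set of unique representatives of $\F_p \cong \Z[i]/(\pi)$, congruence modulo $\pi$ forces equality in $\Z[i]$ and hence coordinatewise. But there is a genuine gap where you decouple rows four and five. You write that $h^4 c^{\top} = 0$ reads $\sum \alpha_j^2 + i \sum \beta_k^2 = \sum \alpha_j'^2 + i \sum \beta_k'^2$ and that ``the factor $i$ lets us separate the two sums.'' It does not: this is an equation in $\F_p$, where $i$ is merely some square root of $-1$, and the quantities $\alpha_j^2$, $\beta_k^2$ are arbitrary field elements subject to no containment in a fundamental region. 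An identity $X + iY = X' + iY'$ in $\F_p$ yields no coordinatewise information unless $X - X' + i(Y - Y')$ is known to be a Gaussian integer of Manhattan weight below $a+b$ --- exactly what $A + A + iA \subseteq R$ supplies for rows two and three and supplies for nothing at higher powers. So your claim that the first three power sums of the $\alpha$-group and of the $\beta$-group separately agree is unjustified.

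The paper closes this hole by exploiting the weight budget instead: each of $c_+$, $c_-$ carries at most three errors, so in any mixed case the minority error type occurs at most once per side, say a single $+i$ error at $\gamma$ in $c_+$ and at $\gamma'$ in $c_-$. Rows two and three (the only ones that genuinely decouple) already give $\re\gamma = \re\gamma'$ and $\im\gamma = \im\gamma'$, hence $\gamma = \gamma'$. Only then does one touch row four: the terms $i\gamma^2$ and $i\gamma'^2$ cancel, leaving $\alpha^2 + \beta^2 = \alpha'^2 + \beta'^2$, which together with $\alpha + \beta = \alpha' + \beta'$ recovers $\{\alpha,\beta\} = \{\alpha',\beta'\}$; row five and the full three-term Newton identities are needed only in the unmixed case of three errors of a single type, where every syndrome is already a pure power sum. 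If you reorganize your case analysis around the size of the minority group rather than attempting a uniform decoupling of every row, your argument goes through.
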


\begin{proof} We follow the outline of the previous proof.  So if~$c$
  is a codeword of Mannheim distance $< 8$, then as before we can
  write $c = c_+ + c_-$ with $c_+$ and $c_-$ having disjoint support
  and the same number of errors $+1$, $+i$.  In particular, the weight
  of~$c$ is even and $\le 6$.  We again have two cases.
  \begin{enumerate}[\quad i)]
  \item Only~$+1$ and~$-1$ errors occur.  Then~$c_+$ and~$c_-$ contain
    at most three errors each, say at positions $\alpha, \beta, \gamma$
    and $\alpha', \beta', \gamma'$.  From the parity-check equations we
    obtain $\alpha^i + \beta^i + \gamma^i = \alpha'^i + \beta'^i +
    \gamma'^i$ for $i = 1, 2, 3$, which allows to deduce the set
    $\{ \alpha, \beta, \gamma \} = \{ \alpha', \beta', \gamma' \}$
    using the Newton identities, cf.~Fact~\ref{fact:newton}.  The
    case of~$+i$ and~$-i$ errors is similar.
  \item Each of~$c_+$ and~$c_-$ has at most two~$+1$ errors and
    one~$+i$ error.  Suppose that in~$c_+$ the~$+1$ errors are at
    positions $\alpha, \beta$ and the~$+i$ error at position~$\gamma$,
    while $\alpha', \beta'$ and $\gamma'$ are the respective error
    positions in~$c_-$.  The second parity-check then gives
    \[ \re \alpha + \re \beta + i \re \gamma \,=\, \re \alpha' + \re
      \beta' + i \re \gamma' \,, \] from which by our assumption
    $A + A + i A \subseteq R$ we infer $\re \gamma = \re \gamma'$.
    Likewise we deduce from the third parity-check that $\im \gamma =
    \im \gamma'$, and hence $\gamma = \gamma'$.  But  then we also have
    $\re \alpha + \re \beta = \re \alpha' + \re \beta'$ and
    $\im \alpha + \im \beta = \im \alpha' + \im \beta'$, which
    implies that $\alpha + \beta = \alpha' + \beta'$.  Using the fourth
    parity-check equation we also obtain $\alpha^2 + \beta^2 =
    \alpha'^2 + \beta'^2$ and thus recover the set $\{ \alpha, \beta \}
    = \{ \alpha', \beta' \}$.  A similar argument applies in
    case~$c_+$ has two~$+i$ errors and one~$+1$ error.
  \end{enumerate}
  Therefore, in all cases the error positions in~$c_+$ and~$c_-$ coincide,
  which only is possible if $c = 0$.
\end{proof}

\begin{exa} For $p = 29$ the choice $A \defeq \{ -1, 0, 1 \}$ fulfills
  $A + A + i A \subseteq R$.  We obtain a code of length $n = 9$ with
  $J = \{ -13, -12, -11, -1, 0, 1, 11, 12, 13 \}$, which by the above
  argument has minimum distance $\ge 7$, thus correcting three
  restricted errors.  The parity-check matrix is given by
  \[ H \,\defeq\, \begin{bmatrix}
      \,1 & 1 & 1 & 1 & 1 & 1 & 1 & 1 & 1\, \\
      -1\! & 0 & 1 & \!-1\! & 0 & 1 & \!-1\! & 0 & 1\, \\
      -1\! & \!-1\! & \!-1\! & 0 & 0 & 0 & 1 & 1 & 1\, \\
      -5\! & \!-1\! & 5 & 1 & 0 & 1 & 5 & \!-1\! & \!-5 \\
      \,7 & \!12\! & 3 & \!-1\! & 0 & 1 & \!-3\! & \!\!-12\!\! & \!-7
    \end{bmatrix} . \] \end{exa}

\section{Codes over Eisenstein integers}

We construct codes being able to correct few restricted errors~$\eps$
where $\eps^6 = 1$.  Throughout we let $\rho \defeq \frac 1 2 ( 1
+ \sqrt 3 i) \in \C$ be a $6$-th primitive root of unity, so that
$\rho^3 = -1$ and $\rho^2 = \rho - 1$.  The \emph{Eisenstein integers}
are given as \[ \Z[\rho] \,=\, \Z + \Z \rho \,\subseteq\, \C \,, \]
and they form a unique factorisation domain and a hexagonal lattice in
$\C = \R + i \R \cong \R^2$.

\begin{fact} Every prime $p \equiv 1 \pmod 6$ is of the form
  $p = a^2 + a b + b^2$ with positive integers $a, b$, unique up to
  order. \end{fact}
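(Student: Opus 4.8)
The plan is to read $p = a^2 + ab + b^2$ as saying that $p$ is a norm from the ring $\Z[\rho]$. Put $N(z) \defeq z\bar z$ for $z \in \Z[\rho]$. From $\rho + \bar\rho = 1$ and $\rho\bar\rho = 1$ (equivalently $\bar\rho = 1 - \rho$) one gets
\[ N(a + b\rho) = (a + b\rho)(a + b\bar\rho) = a^2 + ab(\rho + \bar\rho) + b^2 \rho\bar\rho = a^2 + ab + b^2, \]
a positive definite form; moreover $N$ is multiplicative, and $N(z) = 1$ exactly for the six units, which are $\rho^k$, $k = 0, \dots, 5$ (recall $\rho^3 = -1$) and act on $\C$ as the rotations by multiples of $\pi/3$. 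So it suffices to exhibit $\pi = a + b\rho$ with $N(\pi) = p$ and then to show that $\{a, b\}$ is forced once $a, b > 0$ is required.

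For existence, since $p \equiv 1 \pmod 6$ the order $p - 1$ of $\F_p^*$ is a multiple of $3$, so $\F_p$ contains a primitive cube root of unity, whence $x^2 - x + 1$ (the minimal polynomial of $\rho$, as $\rho^2 = \rho - 1$) has a root $c \in \Z$ modulo $p$. Then $p$ divides $(c - \rho)(c - \bar\rho) = c^2 - c + 1$ in $\Z[\rho]$ but divides neither factor (the quotient would carry $\pm\tfrac1p$ as its $\rho$-coefficient). Since $\Z[\rho]$ is a UFD, $p$ is hence not irreducible, say $p = \pi\pi'$ with $\pi, \pi'$ non-units; taking norms yields $p^2 = N(\pi) N(\pi')$ with both factors $> 1$, so $N(\pi) = p$. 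Multiplying $\pi$ by a unit we may assume $\arg\pi \in [0, \pi/3)$, and then writing $\pi = a + b\rho$ forces $a > 0$ and $b \ge 0$, with $b \ne 0$ as well because $N(\pi) = p$ is not a perfect square.

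For uniqueness, an element of norm $p$ is irreducible, since a proper factorisation would write the prime $p$ as a product of two norms $> 1$; so if $\pi, \tau \in \Z[\rho]$ both have norm $p$, then $\pi\bar\pi = p = \tau\bar\tau$ together with primality of $\pi$ forces $\tau$ to be associate to $\pi$ or to $\bar\pi$. Suppose now $p = a^2 + ab + b^2 = c^2 + cd + d^2$ with $a, b, c, d > 0$ and set $\pi \defeq a + b\rho$, $\tau \defeq c + d\rho$; then $\arg\pi, \arg\tau \in (0, \pi/3)$ while $\arg\bar\pi \in (-\pi/3, 0)$. If $\tau$ is associate to $\pi$, the connecting unit rotates $\arg\pi$ by a multiple of $\pi/3$ yet keeps it inside $(0, \pi/3)$, so it is the identity and $(c, d) = (a, b)$. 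If $\tau$ is associate to $\bar\pi$, the rotation carrying $\arg\bar\pi$ into $(0, \pi/3)$ is by exactly $\pi/3$, i.e.\ $\tau = \rho\bar\pi$; and the identity
\[ \rho\bar\pi = \rho\big((a + b) - b\rho\big) = b + a\rho \]
gives $(c, d) = (b, a)$. Thus the representation is unique up to order (and $a \ne b$, since $a = b$ would make $p = 3a^2$, impossible for $p \equiv 1 \pmod 6$).

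I expect the existence half to be routine once the residue condition $3 \mid p - 1$ is in hand. The point needing most care is the uniqueness bookkeeping — tracking the action of the six units together with conjugation on norm-$p$ elements and pinning down, via the swap identity $\rho\bar\pi = b + a\rho$, the (two) representations lying in a fixed sextant. An alternative is to quote the reduction theory of integral binary quadratic forms of discriminant $-3$, whose class number is $1$, but the argument above stays within the Eisenstein-integer framework already introduced.
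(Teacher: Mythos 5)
Your argument is correct. Note, however, that the paper offers no proof of this statement at all: it is quoted as a classical fact (the Eisenstein analogue of Fermat's two-squares theorem), so there is no "paper approach" to compare against. What you give is the standard proof, and every step checks out: the norm computation $N(a+b\rho)=a^2+ab+b^2$, the existence of a root of $x^2-x+1$ modulo $p$ from $6\mid p-1$, the conclusion that $p$ splits in the UFD $\Z[\rho]$, the identification of the sextant $\arg z\in[0,\pi/3)$ with $\{a>0,\ b\ge 0\}$, and the uniqueness bookkeeping via the six units and conjugation, including the swap identity $\rho\bar\pi=b+a\rho$. Two cosmetic remarks: you overload the symbol $\pi$ as both the Eisenstein prime and the angle in $\arg\pi\in[0,\pi/3)$, which is worth disambiguating; and the detour through a primitive cube root of unity is unnecessary, since $p\equiv 1\pmod 6$ already gives an element of order $6$ in $\F_p^*$ and hence a root of the sixth cyclotomic polynomial $x^2-x+1$ directly. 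Neither affects correctness.
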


On the Eisenstein integers we define the weight
\[ W(\alpha) \,\defeq\, \min \{ |x| + |y| + |z| \mid \alpha
  = x + \rho y + \rho^2 z \} \,. \]
This is the length of a shortest path from~$0$ to~$\alpha$ using steps
$\pm 1$, $\pm \rho$, $\pm \rho^2$.  We let $\pi \defeq a + \rho b \in
\Z[\rho]$.  The ideal $(\pi)$ can be seen as a sublattice with basis
$\pi, \rho \pi$ in the lattice $\Z[\rho]$ spanned by $1, \rho$, which
is of index $a^2 + a b + b^2 = p$, since
\[ \svec{ \pi \\ \rho \pi } \,=\, \begin{bmatrix} \,a & b\, \\
   -b\! & \!a \!+\! b \end{bmatrix} \svec{ 1 \\ \rho } . \]
We thus have $\F_p \cong \Z[\rho] / (\pi)$ and define the quotient
weight \[ w([\alpha]) \,\defeq\, \min \{ W(\alpha') \mid \alpha'
  \in [\alpha] \} \,, \] which in turn gives a “hexagonal” distance.

\begin{exa} For $p = 19$ we have $\pi = 3 + 2 \rho$ and the field
  $\F_p = \{ -9, \dots, 9 \}$ can be illustrated in the hexagonal
  distance as follows.
  \begin{center}\begin{tikzpicture}[scale=0.6]
      \node at (0, 0) {$0$};
      \node at (1, 0) {$1$}; \node at (-1, 0) {$-1$};
      \node at (2, 0) {$2$}; \node at (-2, 0) {$-2$};
      \node at (-1.5, 0.87) {$6$}; \node at (1.5, -0.87) {$-6$};
      \node at (-0.5, 0.87) {$7$}; \node at (0.5, -0.87) {$-7$};
      \node at (0.5, 0.87) {$8$}; \node at (-0.5, -0.87) {$-8$};
      \node at (1.5, 0.87) {$9$}; \node at (-1.5, -0.87) {$-9$};
      \node at (-1, 1.73) {$-5$}; \node at (1, -1.73) {$5$};
      \node at (0, 1.73) {$-4$}; \node at (0, -1.73) {$4$};
      \node at (1, 1.73) {$-3$}; \node at (-1, -1.73) {$3$};
    \end{tikzpicture}\end{center} \end{exa}

Since $W(\pi) = W(\rho \pi) = W(\rho^2 \pi) = a + b$ we easily see the
following.

\begin{fact} There holds $\min \{ W(\alpha) \mid 0 \ne \alpha \in
  (\pi) \} = a + b$. \end{fact}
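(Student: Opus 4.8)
The plan is to prove the two inequalities separately. For ``$\le$'' it suffices to exhibit a single witness: $\pi$ is a nonzero element of the ideal $(\pi)$, and $W(\pi) = a+b$ by the observation immediately preceding the statement. So the real content is the bound $W(\alpha) \ge a+b$ for every nonzero $\alpha \in (\pi)$.

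The first step is to record a lower bound for $W$ that does not depend on the (non-unique) choice of representation occurring in its definition. Write $\alpha = u + \rho v$ with $u,v \in \Z$ uniquely determined, which is possible because $1,\rho$ is a $\Z$-basis of $\Z[\rho]$. I claim $W(\alpha) \ge \max\{|u|,|v|,|u+v|\}$. Indeed, for any representation $\alpha = x + \rho y + \rho^2 z$ the identity $\rho^2 = \rho - 1$ forces $u = x - z$ and $v = y + z$, hence $u+v = x+y$; the triangle inequality in $\Z$ then gives $|x|+|y|+|z| \ge |x-z| = |u|$, and likewise $|x|+|y|+|z| \ge |v|$ and $\ge |x+y| = |u+v|$. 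Taking the minimum over all representations yields the claim. (Equality in fact holds, but only the lower bound is needed.)

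Now take $\alpha \in (\pi) \setminus \{0\}$. Since $\pi, \rho\pi$ form a $\Z$-basis of $(\pi)$, we may write $\alpha = s\pi + t\rho\pi$ with $(s,t) \in \Z^2 \setminus \{(0,0)\}$, and the basis-change matrix displayed above translates this into $u = as - bt$ and $v = bs + (a+b)t$, so that $u+v = (a+b)s + at$. It then remains to verify, using $a,b \ge 1$, that $\max\{\,|as-bt|,\ |bs+(a+b)t|,\ |(a+b)s+at|\,\} \ge a+b$. This is a short sign analysis: if $s$ and $t$ are not of opposite sign, then $|(a+b)s+at| = (a+b)|s| + a|t| \ge a+b$ unless $s = 0$, in which case $|bs+(a+b)t| = (a+b)|t| \ge a+b$; and if $s$ and $t$ are of opposite sign, then $|as - bt| = a|s| + b|t| \ge a+b$. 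In every case the maximum is at least $a+b$, which together with the preliminary lower bound completes the proof.

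The one point requiring care is conceptual rather than computational: because $W$ is a minimum over the redundant representations $\alpha = x + \rho y + \rho^2 z$, one cannot simply read off coordinates of a lattice point, and the representation-independent estimate of the second step is precisely the device that sidesteps this. Once it is in place, the argument runs parallel to the orthogonal-basis reasoning used in the Gaussian case, with $\max\{|u|,|v|,|u+v|\}$ playing the role of the $\ell^1$-length, and the positivity of $a$ and $b$ — which is where the shape $p = a^2 + ab + b^2$ with $a,b \ge 1$ enters — is exactly what powers the sign analysis. An alternative route goes through the norm form $N$ on $\Z[\rho]$ (for which $N(\pi) = p$): then $p \mid N(\alpha)$ forces $N(\alpha) \ge p$, and combining $W(\alpha)^2 \ge N(\alpha)$ with $2p > (a+b)^2$ reduces matters to $N(\alpha) = p$, where $\alpha$ must be an associate of the Eisenstein prime $\pi$; I nevertheless find the direct argument cleaner, since establishing $W^2 \ge N$ itself needs the structural description of $W$.
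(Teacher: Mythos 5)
Your proof is correct, and it is considerably more detailed than what the paper offers: the paper simply remarks that $W(\pi)=W(\rho\pi)=W(\rho^2\pi)=a+b$ and declares the fact ``easily seen,'' in analogy with the Gaussian case where orthogonality of the basis $(\pi,i\pi)$ settled the matter in one line. That shortcut is not available verbatim for the hexagonal lattice, and your argument supplies exactly the missing ingredient: the representation-independent bound $W(\alpha)\ge\max\{|u|,|v|,|u+v|\}$ for $\alpha=u+\rho v$, obtained by eliminating the redundant $\rho^2$-coordinate via $\rho^2=\rho-1$ and applying the triangle inequality to each of $x-z$, $y+z$, $x+y$. The subsequent coordinate computation $u=as-bt$, $v=bs+(a+b)t$, $u+v=(a+b)s+at$ matches the basis-change matrix displayed in the paper, and your case split on the signs of $s$ and $t$ correctly exploits that in each case one of the three linear forms has all its terms of equal sign, so its absolute value is at least $a+b$; together with the witness $W(\pi)=a+b$ this closes both inequalities. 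In short, where the paper relies on the reader to transfer the orthogonal-basis reasoning to the hexagonal setting, you give a self-contained verification (and your closing remark about the norm form $N$ with $N(\pi)=p$ and $2p>(a+b)^2$ is a valid alternative sketch). The only quibble is cosmetic: the claim that equality holds in $W(\alpha)=\max\{|u|,|v|,|u+v|\}$ is true only when $u$ and $v$ have opposite signs (otherwise $W=|u|+|v|$), but since you explicitly use only the lower bound, this does not affect the proof.
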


Consider the “unit hexagon” $\Delta \defeq \{ \alpha \in \C \mid
W(\alpha) \le \frac 1 2 \}$, which is the convex hull of the points
$\frac 1 2 \{ \pm 1, \pm \rho, \pm \rho^2 \}$.  Then the fundamental
region $R \defeq \{ \alpha \in \Z[\rho] \mid W(\alpha) = w([\alpha]) \}$
is also given by $R = \Delta \pi \cap \Z[\rho]$.  The set~$R$ forms a
set of unique representatives for $\F_p \cong \Z[\rho] / (\pi)$.

\subsection{Code construction}

Choose a set $A \subseteq \Z$ with $A = -A$ and $A + \rho A \subseteq R$,
which specifies maps
\begin{gather*}
 \phi \colon A + \rho A \to A \,, \quad x + \rho y \mapsto x \\
 \psi \colon A + \rho A \to A \,, \quad x + \rho y \mapsto y
\end{gather*}
so that $\alpha = \phi(\alpha) + \rho \psi(\alpha)$ for all $\alpha \in
A + \rho A$.

Let $n \defeq |A|^2$ and $J = \{ \alpha_1, \dots, \alpha_n \} \defeq
A + \rho A$.  Define the linear code~$\mcC$ over~$\F_p$ by the
parity-check matrix \[ H \,\defeq\,
  \begin{bmatrix} 1 & \dots & 1 \\
    \phi(\alpha_1) & \dots & \phi(\alpha_n) \\
    \psi(\alpha_1) & \dots & \psi(\alpha_n) \\
    \alpha_1^2 & \dots & \alpha_n^2 \end{bmatrix} . \]
Assume that $a + b \ge 6$.  The same reasoning as in the proof of
Theorem~\ref{thm:first} establishes the following.

\begin{thm} The above code~$\mcC$ has minimum hexagonal distance
  $\ge 6$. \end{thm}

\begin{exa} In the case $p = 19$ we can use $A \defeq \{ -1, 0, 1 \}$.
  Then $n = 9$, $J = \{ -9, -8, -7, -1, 0, 1, 7, 8, 9 \}$ and the
  parity-check matrix is \[
    \begin{bmatrix} \,1 & 1 & 1 & 1 & 1 & 1 & 1 & 1 & 1\, \\
      -1\! & 0 & 1 & \!-1\! & 0 & 1 & \!-1\! & 0 & 1\, \\
      -1\! & \!-1\! & \!-1\! & 0 & 0 & 0 & 1 & 1 & 1\, \\
      \,5 & 7 & \!-8\! & 1 & 0 & 1 & \!-8\! & 7 & 5\, \end{bmatrix} . \]
  As $a + b = 5$ the argument shows that the minimum distance is
  $\ge 5$, so that two hexagonal errors can be corrected. \end{exa}

\subsection{Alternative construction}

Over the Eisenstein integers we present a different construction of
two-error correcting codes.  It does not rely on decomposing elements
into parts but rather on some kind of generalized Newton identity for
two errors.

\begin{lem} Let~$F$ be a field of characteristic $\ne 2, 7$ and let
  $\alpha, \beta \in F$.  With $S_1 \defeq \alpha + \beta$,
  $S_4 \defeq \alpha^4 + \beta^4$, $S_7 \defeq \alpha^7 + \beta^7$,
  if $S_1 \ne 0$, $S_1^4 + S_4 = 2 (\alpha^2 + \alpha \beta +
  \beta^2)^2 \ne 0$ there holds \[ \frac {2 (S_1^7 - S_7)}
    {7 S_1 (S_1^4 + S_4)} \,=\, \alpha \beta \,. \] \end{lem}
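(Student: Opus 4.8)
The plan is to rewrite all three power sums in terms of the elementary symmetric functions $S_1 = \alpha + \beta$ and $e_2 \defeq \alpha \beta$ by means of Newton's identities, and then to conclude by a one-line substitution.

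First I would use that $\alpha$ and $\beta$ are the two roots of $x^2 - S_1 x + e_2$, so that the power sums $p_k \defeq \alpha^k + \beta^k$ satisfy $p_0 = 2$, $p_1 = S_1$ and the recurrence $p_k = S_1 p_{k-1} - e_2 p_{k-2}$. Unwinding this up to $k = 7$ gives in particular $p_4 = S_1^4 - 4 S_1^2 e_2 + 2 e_2^2$ and $p_7 = S_1^7 - 7 S_1^5 e_2 + 14 S_1^3 e_2^2 - 7 S_1 e_2^3$. This is the only computation in the argument, and the point to watch is that the coefficients $1, 7, 14, 7$ occurring in $p_7$ are not binomial coefficients.

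Next, since $\alpha^2 + \alpha\beta + \beta^2 = (\alpha+\beta)^2 - \alpha\beta = S_1^2 - e_2$, I would read off the two factorisations
\[ S_1^4 + S_4 \,=\, S_1^4 + p_4 \,=\, 2\,(S_1^2 - e_2)^2 \,=\, 2\,(\alpha^2 + \alpha\beta + \beta^2)^2 \]
— which incidentally establishes the middle equality stated in the lemma — and
\[ S_1^7 - S_7 \,=\, S_1^7 - p_7 \,=\, 7\,S_1\,e_2\,(S_1^2 - e_2)^2 \,=\, 7\,S_1\,(\alpha\beta)\,(\alpha^2 + \alpha\beta + \beta^2)^2 \,. \]
Writing $Q \defeq \alpha^2 + \alpha\beta + \beta^2$ and substituting both into the left-hand side of the claim yields
\[ \frac{2\,(S_1^7 - S_7)}{7\,S_1\,(S_1^4 + S_4)} \,=\, \frac{2 \cdot 7\,S_1\,(\alpha\beta)\,Q^2}{7\,S_1 \cdot 2\,Q^2} \,=\, \alpha\beta \,. \]

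It then remains only to justify that this cancellation is legitimate, i.e.\ that the quantities cancelled are nonzero: $7 \ne 0$ in~$F$ because the characteristic of~$F$ is not~$7$; $S_1 \ne 0$ is a hypothesis; and $Q \ne 0$ because $2\,Q^2 = S_1^4 + S_4 \ne 0$ while the characteristic is not~$2$. I do not expect any genuine obstacle — the argument is entirely elementary — the only care needed being the bookkeeping in the expansion of $p_7$ and the matching of the three non-vanishing hypotheses to the three divisions performed. One could equivalently reduce $\alpha^7$ and $\beta^7$ modulo the quadratic $x^2 - S_1 x + e_2$, but that is just the same recurrence in disguise.
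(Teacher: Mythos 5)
Your proof is correct and is essentially the paper's argument: both verify the identity by a direct polynomial computation exhibiting a common factor of numerator and denominator, the paper's factor $\alpha^5+3\alpha^4\beta+5\alpha^3\beta^2+5\alpha^2\beta^3+3\alpha\beta^4+\beta^5$ being exactly your $S_1(S_1^2-e_2)^2/S_1 \cdot S_1 = S_1 Q^2 / S_1$ up to the bookkeeping, i.e.\ $(\alpha+\beta)(\alpha^2+\alpha\beta+\beta^2)^2 = S_1 Q^2$. Your Newton-recurrence organization and the explicit check of the non-vanishing hypotheses are a slightly tidier presentation of the same computation.
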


Considering $S_1, S_4, S_7$ as syndromes we are therefore able to
recover the error locator polynomial $(1 - \alpha z) (1 - \beta z) =
1 - (\alpha + \beta) z + \alpha \beta z^2$.

\begin{proof} We compute $ S_1^7 - S_7 = 7 \alpha \beta (\alpha^5 +
  3 \alpha^4 \beta + 5 \alpha^3 \beta^2 + 5 \alpha^2 \beta^3 +
  3 \alpha \beta^4 + \beta^5)$ and $S_1 (S_1^4 + S_4) =  2 (\alpha^5 +
  3 \alpha^4 \beta + 5 \alpha^3 \beta^2 + 5 \alpha^2 \beta^3 +
  3 \alpha \beta^4 + \beta^5$). \end{proof}

When decoding restricted errors, often not all the syndromes
$S_1, S_2, S_3, \dots$ are available but only a subset, for example
$S_1, S_3, S_5, \dots$ in the case of Berlekamp's negacyclic codes.
While the syndromes $S_1, S_3$ also recover the errors as
\[ \frac {S_1^3 - S_3} {3 S_1} \,=\, \alpha \beta \,, \]
we remark that $S_1, S_5, S_9$ seem not sufficient; in fact Huber's
algorithm~\cite{huber} uses $S_1, S_5, S_9, S_{13}$.

In the following we denote by $\zeta \defeq \rho^2$ the third
primitive root of unity.  Let $J = \{ \alpha_1, \dots, \alpha_n \}$ be
a set of representatives modulo multiplication by~$\zeta$ or~$\zeta^2$.
Each class in~$\F_p^*$ has three elements, hence including zero there
are \[ n \defeq \tfrac 1 3 (p \!-\! 1) + 1 = \tfrac 1 3 (p \!+\! 2) \]
classes.  If $\gamma \in \F_p^*$ is a primitive element, so that $\zeta
= \gamma^{(p - 1) / 3}$, we can take $J \defeq \{ 0, 1, \gamma, \dots,
\gamma^{(p - 4) / 3} \}$.  Consider the linear code~$\mcC$ over~$\F_p$
with parity-check matrix
\[ H \,\defeq\, \begin{bmatrix} \,1 & \dots & 1\, \\
    \,\alpha_1 & \dots & \alpha_n\, \\
    \,\alpha_1^4 & \dots & \alpha_n^4\, \\
    \,\alpha_1^7 & \dots & \alpha_n^7\, \end{bmatrix} . \]

We continue to assume that $a + b \ge 6$.

\begin{thm} The above code~$\mcC$ has minimum hexagonal distance
  $\ge 6$. \end{thm}

\begin{proof} As before, let $c \in \mcC$ be a codeword of weight
  $< 6$, and denote the rows of the parity-check matrix by
  $h^1, \dots, h^4$.  Since $a + b \ge 6$ we infer from
  $h^1 c^{\top} = 0$ that the error values sum up to~$0$.  This allows
  us to decompose the codeword as $c = c_+ - c_-$, where $c_+, c_-$
  have disjoint support with errors $1, \zeta, \zeta^2$ only.
  Here~$c_+$ accounts for the errors $+1$, $+\zeta$, $+\zeta^2$
  and~$c_-$ for the errors $-1$, $-\zeta$, $-\zeta^2$, thus covering
  all errors in $\pm 1, \pm \rho, \pm \rho^2$ since $-1 = \rho^3$.
  The codeword~$c$ is then of even weight~$\le 4$ and we have the
  following.

  The vector~$c_+$ has at most two errors, say at positions
  $\alpha, \beta$, with values $\zeta^e, \zeta^f$ for some
  $e, f \in \{ 0, 1, 2 \}$.  Then the vector~$c_-$ has these values at
  some positions $\alpha', \beta'$.  To simplify notation, let us
  assume a typical case of a $1$-error and a $\zeta$-error.  From the
  parity-check equations we obtain
  \begin{align*}
    \alpha + \zeta \beta \,&=\, \alpha' + \zeta \beta' \\
    \alpha^4 + (\zeta \beta)^4 \,&=\, \alpha'^4 + (\zeta \beta')^4 \\
    \alpha^7 + (\zeta \beta)^7 \,&=\, \alpha'^7 + (\zeta \beta')^7
  \end{align*}
  noting that $\zeta = \zeta^4 = \zeta^7$.  Now we deduce
  $\alpha (\zeta \beta) = \alpha' (\zeta \beta')$ and thus recover the
  set $\{ \alpha, \zeta \beta \} = \{ \alpha', \zeta \beta' \}$ by
  using the lemma.  (Indeed, $S_1 = 0$ does not hold by our choice
  of~$J$, and if $S_1^4 + S_4 = 0$ we obtain $\alpha (\zeta \beta)$
  directly.)  Finally, since the positions~$\alpha_i$ are chosen to be
  unique modulo multiplication by~$\zeta$ or~$\zeta^2$, we arrive at a
  contradiction unless $c = 0$. \end{proof}

\begin{exa} For $p = 37$ we let $n = 13$ and choose the
  parity-check matrix \[ \begin{bmatrix}
      \,1 &  1 &  1 &  1 &  1 &  1 &  1 &  1 &  1 &  1 &  1 &  1 &  1\, \\
      \,0 &  1 &  2 &  4 &  8 & \!16\! & \!-5\! & \!\!-10\!\! & \!17\! &
        \!-3\! & \!-6\! & \!\!-12\!\! & \!13 \\
      \,0 &  1 & \!16\! & \!-3\! & \!\!-11\!\! &  9 & \!-4\! & \!10\! &
        \!12\! & 7 &  1 & \!16\! & \!-3 \\
      \,0 &  1 & \!17\! & \!-7\! & \!-8\! & \!12\! & \!\!-18\!\! &
        \!\!-10\!\! & \!15\! & \!-4\! &  6 & \!-9\! & \!-5 \\
    \end{bmatrix} \]
  for a code of minimum distance $\ge 6$, thus correcting two
  hexagonal errors. \end{exa}

\begin{rem} We can also choose the $\alpha_1, \dots, \alpha_n$ from an
  extension field of~$\F_p$ to obtain two-error correcting codes of
  larger length.  For example, using the finite field~$\F_{p^2}$ we
  get such codes of length $n \defeq \frac 1 3 (p^2 + 2)$ with a
  parity-check matrix over~$\F_p$ having~$7$ rows. \end{rem}

\section*{Conclusion}

In this work we propose novel constructions of codes correcting few
restricted errors, where we focus on the case of $m = 4$ or $m = 6$
error values.  These codes have larger minimum restricted weight than
corresponding MDS codes would have for the Hamming weight.  Over the
Gaussian integers our codes tend to have better parameters when
compared to Huber's icyclic codes~\cite{huber}.

In principle, the first, more geometric code constructions may be
extended to correct four or more errors, if a larger prime~$p$ is
chosen and the set~$A$ is made smaller.  It is an interesting problem
to extend the results to other subgroups of restricted errors beyond
$m > 6$.  However, the cyclotomic rings $\Z[\zeta_m]$ for $\zeta_m
\defeq \exp ( \frac{2 \pi i} m ) \in \C$ do not embed into a
two-dimensional lattice in this case.

It would also be fascinating to extend the Newton type identities for
other sets of syndromes, either to correct more than two errors or to
handle larger subgroups~$E$.

Finally, we remark that codes of larger restricted distance with
possible applications in a code-based McEliece like cryptosystem may
be obtained by applying a product code construction~\cite{freuden1}.

\end{document}